\documentclass[11pt,reqno]{amsart}

\usepackage{amscd,amssymb,amsmath,amsthm}
\usepackage{tikz}
\usepackage{graphicx}
\usepackage{color}
\usepackage{cite}
\topmargin=0.1in \textwidth5.92in \textheight7.8in

\newtheorem{thm}{Theorem}

\newtheorem{rk}{Remark}

\numberwithin{equation}{section} \setcounter{tocdepth}{1}

\newcommand{\bea}{\begin{eqnarray}}
\newcommand{\eea}{\end{eqnarray}}

\newcommand{\Z}{\mathbb{Z}}
\newcommand{\Q}{\mathbb{Q}}




\def\Z{\mathbb{Z}}


\begin{document}
\title[Construction of a set of p-adic distributions]{Construction of a set of p-adic distributions}

\author{U. A. Rozikov, Z. T. Tugyonov}

\address{U.\ A.\ Rozikov and Z.\ T.\ Tug'yonov\\ Institute of mathematics,
29, Do'rmon Yo'li str., 100125, Tashkent, Uzbekistan.}
\email {rozikovu@yandex.ru \ \ zohidboy@mail.ru }

\begin{abstract} In this paper adapting to $p$-adic case some methods of real valued Gibbs measures on
Cayley trees  we construct several $p$-adic distributions on the set $\mathbb{Z}_p$ of $p$-adic integers.
Moreover, we give conditions under which these $p$-adic distributions become $p$-adic measures 
(i.e. bounded distributions).
 \end{abstract}
\maketitle

{\bf Mathematics Subject Classifications (2010).} 46S10, 82B26, 12J12 (primary);
60K35 (secondary)

{\bf{Key words.}} Cayley trees, $p$-adic numbers, $p$-adic distributions, $p$-adic measures.

\section{Introduction} \label{sec:intro}

 A $p$-adic distribution is an analogue of ordinary distributions that takes values in a ring of $p$-adic numbers \cite{29}.
 Analogically to a measure on a measurable space, a $p$-adic measure is a special case of a $p$-adic distribution, i.e., a p-adic distribution taking values in a normed space is called a $p$-adic measure if the values on compact open subsets are bounded.

The main purpose for constructing $p$-adic measures is to integrate $p$-adic valued functions.
The development parallels the classic treatment, beginning with the
idea of Riemann sums. Note that such sums may not converge (even for continuous functions) if instead of $p$-adic measures one uses
unbounded $p$-adic distributions.
The theory of $p$-adic measures also useful in the context of $p$-adic $L$-functions following
the works of B. Mazur (see \cite{29} and \cite{GG} for details).

One of the first applications of $p$-adic numbers in quantum physics
appeared in the framework of quantum logic \cite{8}.
Note that this model cannot be
described by using conventional real-valued probability measures.
In (real valued probability) measure theory Kolmogorov's extension theorem (see, e.g.,
\cite[Ch.\,II]{Shiryaev}) is very helpful to construct measures.

A non-Archimedean analogue of the Kolmogorov's theorem was proved in
\cite{16}. Such a result allows to construct wide classes of $p$-adic distributions,
stochastic processes and the possibility to develop statistical
mechanics in the context of $p$-adic theory.

We refer the reader to \cite{GRR}, \cite{19}, \cite{26},\cite{23}-\cite{M} where
various $p$-adic models of statistical physics and $p$-adic distributions
are studied.

In the present paper we construct several $p$-adic distributions and measures on the set  $\mathbb{Z}_p$ of $p$-adic integers.
 We use some arguments of construction of real valued Gibbs measures on
Cayley trees. To do this we present each element of the set  $\mathbb{Z}_p$ by a path in the half (semi-infinite) Cayley tree.
  Then using known distribution relation (see equation (\ref{e2}) below) we show that each
  collection of given $p$-adic distributions can be used to construct new distributions. Moreover, we give conditions under which these $p$-adic distributions become $p$-adic measures.

\section{Preliminaries}

\subsection{$p$-adic numbers.} Let $\Q$ be the field of rational numbers. For a fixed prime number $p$, every rational number $x\ne 0$ can be represented
in the form $x = p^r{n\over m}$, where $r, n\in \Z$, $m$ is a positive integer, and $n$ and $m$ are relatively prime with $p$: $(p, n) = 1$, $(p, m) = 1$. The $p$-adic norm of $x$ is given by
$$|x|_p=\left\{\begin{array}{ll}
p^{-r}\ \ \mbox{for} \ \ x\ne 0\\
0\ \ \mbox{for} \ \ x = 0.
\end{array}\right.
$$
This norm is non-Archimedean  and satisfies the so called strong triangle inequality
$$|x+y|_p\leq \max\{|x|_p,|y|_p\}.$$

The completion of $\Q$ with respect to the $p$-adic norm defines the $p$-adic field
 $\Q_p$. Any $p$-adic number $x\ne 0$ can be uniquely represented
in the canonical form
\begin{equation}\label{ek}
x = p^{\gamma(x)}(x_0+x_1p+x_2p^2+\dots),
\end{equation}
where $\gamma(x)\in \Z$ and the integers $x_j$ satisfy: $x_0 > 0$, $0\leq x_j \leq p - 1$. In this case $|x|_p = p^{-\gamma(x)}$.

The elements of the set $\mathbb{Z}_p=\{x\in \Q_p: |x|_p\leq 1\}$ are called $p$-adic integers.

We refer the reader to \cite{29, 41, 48} for the basics of $p$-adic analysis and $p$-adic mathematical physics.

\subsection{$p$-adic distribution and measure}

Let $(X,{\mathcal B})$ be a measurable space, where ${\mathcal B}$ is an algebra of subsets of $X$.
A function $\mu: {\mathcal B}\to \Q_p$
is said to be a $p$-adic {\it distribution} if for any $A_1, . . . ,A_n\in {\mathcal B}$
such that $A_i\cap A_j = \emptyset$, $i\ne j$, the following holds:
$$\mu(\bigcup^n_{j=1}A_j)=\sum^n_{j=1}\mu(A_j).$$

A $p$-adic distribution $\mu$ is called {\it measure} if it is bounded, i.e.
$$ |\mu(A)|_p\leq +\infty, \ \ \forall A\in \mathcal B. $$
A measure is called a probability measure if $\mu(X) =
1$, see, e.g. \cite{22}, \cite{40}.

In the metric space $\Q_p$ (with metric $\rho(x,y)=|x-y|_p$) basis of open sets can be taken as the following
system of sets:
$$a+(p^N)=\left\{x\in \Q_p: |x-a|_p\leq {1\over p^N}\right\},\ \ a\in \Q_p, \ \ N\in \mathbb{Z}.$$
Such a set is called an interval.
The following theorem is known:

\begin{thm}\label{t3.1.} (see \cite{29}) Every map $\mu$ of the set of
intervals contained in $X\subset \mathbb{Q}_p$, for
which
\begin{equation}\label{e2}
\mu(a+(p^n))=\sum\limits_{b=0}^{p-1}\mu(a+bp^n+(p^{n+1}))
\end{equation} for any $a+(p^n)\subset X$, can be uniquely extended to a
$p$-adic distribution on $X$. \end{thm}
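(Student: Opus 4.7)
The plan is to extend $\mu$ from intervals to the algebra $\A$ generated by them (finite disjoint unions of intervals contained in $X$), where the extension is forced by additivity so uniqueness is automatic; the real content is proving well-definedness. I would first record two structural facts about intervals in $\Q_p$. The ultrametric (strong triangle) inequality implies that any two intervals are either disjoint or nested: if $a+(p^n)$ and $b+(p^m)$ intersect and $n \leq m$, then $b+(p^m) \subset a+(p^n)$. Iterating the relation (\ref{e2}), any interval $a+(p^n)$ is a disjoint union of $p^{N-n}$ intervals of the form $a+cp^n+(p^N)$ for $N \geq n$; in particular every $A\in\A$ can, for $N$ large enough, be written as a disjoint union of level-$N$ intervals.

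For $A = \bigsqcup_{i=1}^r I_i \in \A$ I would define
\[
\mu(A) := \sum_{i=1}^r \mu(I_i).
\]
To see this does not depend on the decomposition, suppose also $A = \bigsqcup_{j=1}^s J_j$ with each $I_i$, $J_j$ an interval, and pick $N$ exceeding the level of every $I_i$ and $J_j$. Applying the nested-or-disjoint dichotomy to each level-$N$ ball $K \subset A$ against $I_i$ and against $J_j$, one sees that $K$ lies in exactly one $I_i$ and in exactly one $J_j$; hence the level-$N$ refinements of the two decompositions coincide set-theoretically. Iterating (\ref{e2}) then shows that $\sum_i \mu(I_i)$ equals the sum of $\mu$ over the level-$N$ sub-intervals of $A$, and similarly for $\sum_j \mu(J_j)$; the two values therefore agree.

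Once well-definedness is in hand, finite additivity of the extension is immediate, since a disjoint union $A \sqcup B$ of elements of $\A$ is represented by concatenating interval decompositions of $A$ and $B$. Uniqueness is also automatic: any $p$-adic distribution extending $\mu$ must, by additivity on disjoint interval decompositions, take the value $\sum_i \mu(I_i)$ on each $A \in \A$, so it coincides with the construction above.

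The main obstacle is the well-definedness step, whose heart is the nested-or-disjoint rigidity of balls in $\Q_p$ — a feature that fails in Archimedean settings — which is what allows any two interval decompositions of the same set to be driven to a \emph{common} finer partition. After that rigidity is exploited, the relation (\ref{e2}) mechanically interpolates the values of $\mu$ down to the common refinement and the remainder of the argument is routine bookkeeping.
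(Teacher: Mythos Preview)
Your argument is correct and is essentially the standard proof (as in Koblitz \cite{29}): common refinement of two interval decompositions via the ultrametric nested-or-disjoint dichotomy, then iterated use of (\ref{e2}) to compare values. Note, however, that the paper does not give its own proof of this theorem; it merely cites \cite{29}, so there is nothing to compare your approach against within the paper itself.
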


Equation (\ref{e2}) is called a distribution relation.

\begin{rk} If $\mu_1,\mu_2,\dots,\mu_m$ are given distributions with values in $\mathbb{Q}_p$ and
$\lambda_1, \lambda_2, \dots, \lambda_m\in \mathbb{Q}_p$ are arbitrary constants. Then
\begin{equation}\label{ee}
\mu=\sum_{i=1}^m\lambda_i\mu_i
\end{equation}
is a distribution. Indeed, define
$$\mu(a+(p^n))=\sum_{i=1}^m\lambda_i\mu_i(a+(p^n))$$
then $\mu$ satisfies (\ref{e2}) since each $\mu_i$ satisfies (\ref{e2}).
In the theory of real valued measures if one considers a convex set
of such measures then to describe all points of this set it will be sufficient to know all
extreme points of the set. In such a situations measures of the form (\ref{ee}) are not extreme and can be considered
as linear combination of extreme measures.
In the theory of (real) Gibbs measures for a given Hamiltonian (see \cite{17} for details) the set of all such measures
is a convex compact set. To construct new extreme points of this set in \cite{y2} the authors used some known extreme
points. In this paper we will use idea of \cite{y2} to construct (see Theorem \ref{t2} below) new $p$-adic distributions
from the known ones. Proof of Theorem \ref{t3} also reminds the construction of a limiting Gibbs measure with a given boundary condition
on a Cayley tree (see \cite{R} for details of the theory of Gibbs measures on trees).
\end{rk}

\subsection{Cayley tree.} The Cayley tree (Bethe lattice \cite{y1}) $\Gamma^k$
of order $ k\geq 1 $ is an infinite tree, i.e., a graph without
cycles, such that exactly $k+1$ edges originate from each vertex.
Let $\Gamma^k=(V, L)$ where $V$ is the set of vertices and  $L$ the set of edges.
Two vertices $x$ and $y$ are called {\it nearest neighbors} if there exists an
edge $l \in L$ connecting them.
We will use the notation $l=\langle x,y\rangle$.
A collection of nearest neighbor pairs $\langle x,x_1\rangle, \langle x_1,x_2\rangle,...,\langle x_{d-1},y\rangle$ is called a {\it
path} from $x$ to $y$. The distance $d(x,y)$ on the Cayley tree is the number of edges of the shortest path from $x$ to $y$.

If an arbitrary edge $\langle x^0, x^1\rangle=l\in L$ is deleted from the Cayley tree $\Gamma^k$, it splits into two components-- two semi-infinite trees $\Gamma^k_0$ and $\Gamma^k_1$ (each of which is called a half tree). In this paper we consider
half tree  $\Gamma^k_0=(V^0, L^0)$ (see Fig. \ref{f1}).

\begin{figure}
\includegraphics[width=12.5cm]{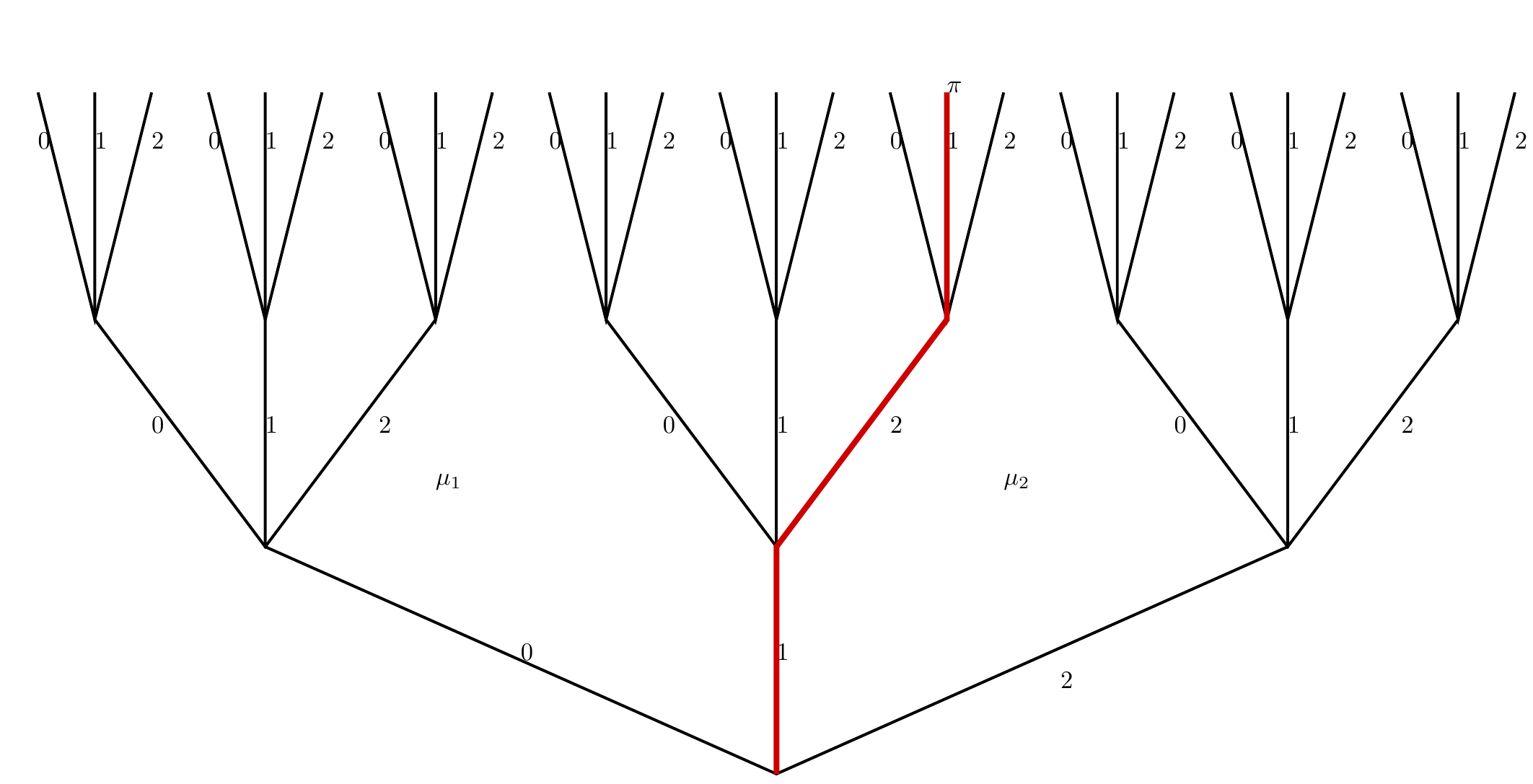}
\caption{\footnotesize \noindent
Half Cayley tree of order three. The bold path $\pi$ has representation by sequence $121\dots$. On the left (resp. right) side of this path the distribution $\mu$ coincides with $\mu_1$ (resp. $\mu_2$).}\label{f1}
\end{figure}

For a fixed $x^0\in V^0$, called the root, we set
\begin{equation*}
W_n=\{x\in V^0| d(x,x^0)=n\}, \qquad V^0_n=\bigcup_{m=0}^n W_m.
\end{equation*}

On the tree $\Gamma^k$ one can introduce a partial ordering, by saying that $y>x$ if there exists
a path $x=x_0,x_1,...,x_n=y$ from $x$ to $y$ that ``goes upwards'', i.e.,
such that $d(x_m,x^0)=d(x_{m-1},x^0)+1, m=1,\, \dots,n.$ The set of vertices $V_x=\{y\in V^0| y\geq x\}$
and the edges connecting them form the semi-infinite tree $\Gamma^k_x$ ``growing'' from the vertex $x\in V^0$.

Let $p$ be a prime number. In this paper we consider the half Cayley tree of order $p$.
Take  an arbitrary (finite or infinite) path $x^0=x_0< x_1 <x_2 < \dots$ on $\Gamma_0^p$
starting from the point $x^0$. We can represent the
path by a sequence $i_0i_1i_2\dots$, where $i_n = 0,1,\dots, p-1$.
Namely, we label by $l_0(x), l_1(x),\dots, l_{p-1}(x)$ the edges
going upward from the vertex $x\in V^0$. Then the path $x^0=x_0<
x_1 < x_2 < \dots$ can be assigned a sequence $i_0i_1\dots$ such
that $\langle x_{n-1},x_n\rangle = l_{i_{n-1}}(x_{n-1})$, $n
=1,2,\dots$; the sequence $i_0i_1i_2\dots$ unambiguously
determines the path $x^0=x_0< x_1 < x_2 < \dots$.

A finite path $x^0=x_0< x_1 < x_2 < \dots<x_n=x$ of length $n$
determines a point $x\in W_n$. If the path $x^0=x_0< x_1 < x_2 <
\dots < x_n=x$  is represented by the sequence $i_0i_1\dots i_n$,
we assume that the vertex $x$ is also represented by this
sequence (see Fig. \ref{f1}).

Let $x,y\in W_n$, and let $x$ be represented by the sequence
$i_0i_1\dots i_n$ and $y$ by $j_0j_1\dots j_n$, where $i_0 = j_0$,
$i_1 = j_1$,\dots, $i_m =j_m$, but $i_{m+1} < j_{m+1}$ for some
$m$. In this case, we write $x\prec y$. Similarly, for pathes $\pi_1$
represented by a sequence $i_0i_1\dots i_n...$ and $\pi_2$ represented by $j_0j_1\dots j_n...$,
we write $\pi_1\prec \pi_2$ if  $i_0 = j_0$,
$i_1 = j_1$,\dots, $i_m =j_m$, but $i_{m+1} < j_{m+1}$ for some
$m$.

Let $\pi=\{x^0=x_0<x_1<\dots\}$ be an infinite path, represented
by the sequence $i_0i_1i_2\dots $. We assign the $p$-adic number
$$t=t(\pi)=\sum_{n=0}^{\infty}{i_n p^n}, \ \  t\in \mathbb{Z}_p $$
to the path $\pi$.  This assignment is a 1-1 correspondence.
Moreover, the partial order $\prec$ on the set of paths can
be used to obtain partial order on $\mathbb{Z}_p$: namely, for $t_1,t_2\in \mathbb{Z}_p$ we write
$t_1\prec t_2$ iff they correspond to $\pi_1$,$\pi_2$ with $\pi_1\prec \pi_2$.

\section{Construction of distributions}

Several examples of $p$-adic distributions are known: Haar's, Mazur's, Bernoulli's distributions etc.\cite{29}.
Recently some periodic distributions were constructed in \cite{T}-\cite{T2}.

Here for any two $p$-adic distribution given on $\mathbb{Z}_p$, say $\mu_1$ and $\mu_2$, $\mu_1\ne \mu_2$ we construct a new distribution
$\mu=\mu(\mu_1,\mu_2)$ defined on $\mathbb{Z}_p$.

Fix a path $\pi$ with representation $\pi=i_0i_1\dots$. Take $a\in \mathbb{Z}_p$.
If there exists $m\geq 0$ such that
$a$ has representation $a=i_0+i_1p+\dots+i_mp^m+a_{m+1}p^{m+1}+\dots$ with $a_{m+1}\ne i_{m+1}$, then we denote this $m$ by
$m(a,\pi)$.
We say that two pathes starting from $x^0$ has non-empty intersection if they have at least one common edge.

\begin{thm}\label{t2} Let $\pi=\{x^0=x_0<x_1<\dots\}$ be an infinite path, represented
by the sequence $i_0i_1i_2\dots $ and $\mu_1$, $\mu_2$ are given distributions on $\mathbb{Z}_p$ such that $\mu_1\ne\mu_2$,
\begin{equation}\label{sh}
\mu_1(a+(p^n))=\mu_2(a+(p^n)), \forall n\leq m(a,\pi)+1,
\end{equation}
then there is a distribution $\mu$ on $\mathbb{Z}_p$, which on the intervals is defined as follows
\begin{equation}\label{mu}
\mu(a+(p^n))=\left\{\begin{array}{lllll}
\mu_1(a+(p^n)), \ \ \mbox{if} \ \ \pi_a\prec \pi, \pi_a\cap \pi=\emptyset \ \ \mbox{or} \ \ \pi_a\cap \pi\ne\emptyset, \ \ n\leq m\\
\ \ \ \ \ \ \ \ \ \ \ \ \ \ \ \ \ \ \ \ \ \mbox{or} \ \ \pi_a\cap \pi\ne\emptyset, \ \ n\geq m+1, \, a_{m+1}<i_{m+1}\\
\mu_2(a+(p^n)), \ \ \mbox{if} \ \ \pi\prec \pi_a, \pi_a\cap \pi=\emptyset \ \ \mbox{or}\\
   \ \ \ \ \ \ \ \ \ \ \ \ \ \ \ \ \ \ \ \ \ \ \pi_a\cap \pi\ne\emptyset, \ \ n\geq m+1, \, a_{m+1}>i_{m+1}\\
\end{array}
\right.
\end{equation}
where $m=m(a,\pi)$ and $\pi_a$ is the path representing $a\in \mathbb{Z}_p$ (see Fig. \ref{f1}).
\end{thm}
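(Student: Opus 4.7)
The strategy is to invoke Theorem \ref{t3.1.}: it suffices to check that (\ref{mu}) gives a well-defined value on each interval of $\mathbb{Z}_p$, and that these values satisfy the distribution relation (\ref{e2}).

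First I would settle well-definedness. Two representatives of the same interval $a+(p^n)$ agree on the digits $a_0,\dots,a_{n-1}$ but may differ beyond. The clauses of (\ref{mu}) are therefore unambiguous whenever the first position where $a$ disagrees with $\pi$ already lies in $\{0,\dots,n-1\}$, since then $\pi_a\cap\pi$, $m(a,\pi)$, and the sign of $a_{m+1}-i_{m+1}$ depend only on the interval. The one remaining regime is $a_0=i_0,\dots,a_{n-1}=i_{n-1}$, that is, the defining vertex of $a+(p^n)$ lies on $\pi$ --- equivalently $n\le m(a,\pi)+1$ for every representative --- and there hypothesis (\ref{sh}) directly forces $\mu_1(a+(p^n))=\mu_2(a+(p^n))$, so both clauses of (\ref{mu}) yield the same value.

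Next, to verify (\ref{e2}), I fix $I=a+(p^n)$ and its $p$ children $I_c=a+cp^n+(p^{n+1})$ and case-split on whether the defining vertex of $I$ lies on $\pi$. In the off-$\pi$ case, the first index $k<n$ with $a_k\ne i_k$ is a defining digit of every $I_c$ too, so every child falls in the same clause of (\ref{mu}) as $I$; the distribution relation for $\mu$ reduces to the distribution relation for whichever of $\mu_1$ or $\mu_2$ applies. In the on-$\pi$ case, $\mu(I)=\mu_1(I)=\mu_2(I)$ and $\mu(I_{i_n})=\mu_1(I_{i_n})=\mu_2(I_{i_n})$ by (\ref{sh}), while $\mu(I_c)=\mu_1(I_c)$ for $c<i_n$ and $\mu(I_c)=\mu_2(I_c)$ for $c>i_n$.

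The main obstacle is to close this on-$\pi$ case, namely to verify
\[
\mu_1(I)=\mu_1(I_{i_n})+\sum_{c<i_n}\mu_1(I_c)+\sum_{c>i_n}\mu_2(I_c).
\]
My plan is to compare the two genuine distribution relations $\mu_j(I)=\sum_{c=0}^{p-1}\mu_j(I_c)$ for $j=1,2$, and to exploit the (\ref{sh})-equalities at levels $n$ and $n+1$ along $\pi$ in order to reorganise the $c>i_n$ contributions from $\mu_1$ into $\mu_2$. Once this bookkeeping closes, an appeal to Theorem \ref{t3.1.} completes the construction.
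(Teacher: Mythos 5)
Your setup is sound and, up to the last step, it is essentially the paper's own argument: well-definedness of (\ref{mu}) on the intervals that follow $\pi$ comes from (\ref{sh}); off the path every child interval inherits the clause of its parent, so (\ref{e2}) reduces to the distribution relation of $\mu_1$ or of $\mu_2$ alone; and the whole construction hinges on the single on-path identity
\begin{equation*}
\mu_1(I)=\mu_1(I_{i_n})+\sum_{c<i_n}\mu_1(I_c)+\sum_{c>i_n}\mu_2(I_c),
\qquad\text{equivalently}\qquad
\sum_{c>i_n}\mu_1(I_c)=\sum_{c>i_n}\mu_2(I_c).
\end{equation*}
But you never prove this identity; you only announce a plan, and the plan cannot be carried out from the stated hypotheses. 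Subtracting the two genuine relations $\mu_j(I)=\sum_{c}\mu_j(I_c)$ and using (\ref{sh}) at levels $n$ and $n+1$ (which kills $(\mu_1-\mu_2)(I)$ and $(\mu_1-\mu_2)(I_{i_n})$) yields only
\begin{equation*}
\sum_{c<i_n}\bigl(\mu_1-\mu_2\bigr)(I_c)+\sum_{c>i_n}\bigl(\mu_1-\mu_2\bigr)(I_c)=0,
\end{equation*}
i.e.\ the \emph{total} discrepancy over the off-path children cancels, whereas you need the discrepancy over the children on one side of $\pi$ to vanish by itself. When $p=2$, or when $i_n\in\{0,p-1\}$, one of the two sums is empty and the step closes; for $p\ge 3$ and $0<i_n<p-1$ it does not, because (\ref{sh}) constrains only intervals meeting $\pi$ and says nothing about the off-path children individually. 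Concretely, take $p=3$, $\pi=111\dots$, let $\mu_1$ be the Haar distribution and let $\mu_2$ agree with $\mu_1$ on every interval meeting $\pi$ but satisfy $\mu_2(0+(3))=\tfrac13+1$ and $\mu_2(2+(3))=\tfrac13-1$, each branch then subdivided evenly. Then (\ref{sh}) holds and $\mu_1\ne\mu_2$, yet $\mu$ as defined by (\ref{mu}) gives $\mu(\mathbb{Z}_3)=1$ while its three children sum to $0$, so (\ref{e2}) fails at the root. No bookkeeping can repair this.

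You should also know that the paper's own proof makes exactly the leap you postponed: its equation (\ref{z}) (the $n=0$ case) and its subcase $n\le m+1$ are both asserted to hold ``by condition (\ref{sh})'' with no further argument, which, as the example above shows, is only justified when the one-sided sums are forced to agree (e.g.\ $p=2$). So your write-up has isolated, rather than filled, the weak point of the published argument; to complete the theorem one must either restrict to $p=2$ or strengthen (\ref{sh}) so that, at every vertex of $\pi$, the sums of $\mu_1$ and of $\mu_2$ over the children lying on each fixed side of $\pi$ coincide (for instance by demanding $\mu_1=\mu_2$ on all intervals branching off $\pi$).
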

\begin{proof}
Consider the following (all possible) cases:

{\it Case I: $\pi_a\cap \pi=\emptyset$}. In this case the path $\pi_a$ always remains on the ``left side" (i.e. $a_0<i_0$)
or always on the ``right side" (i.e. $a_0>i_0$)
from the path $\pi$. For any $n\geq 1$ and any $b=0,1,\dots,p-1$ the number
 $bp^n$ (present in formula (\ref{e2})) makes contributions only on
 levels $n$, $n+1$, ... of the branch $V_{a_0}$ growing from
 the vertex of the edge numbered by $a_0$, (where $a_0\ne i_0$) of the tree $\Gamma^p_0$.
 We note that $V_{a_0}$ does not intersect with $\pi$. Consequently
 we have $\pi_{a+bp^n}\cap \pi=\emptyset$. By definition of $\mu$ given in (\ref{mu})
 in case $\pi_{a}\cap \pi=\emptyset$ and $\pi_{a+bp^n}\cap \pi=\emptyset$ for any $b=0,1,...,p-1$
 one easily checks that $\mu$ always coincides with $\mu_1$ (or $\mu_2$), hence it satisfies the distribution relation (\ref{e2}).

 In case $n=0$, the exists $b_0\in \{0,1,\dots,p-1\}$ such that $\pi_{a+b_0}\cap \pi\ne\emptyset$,
 i.e. $a+b_0=i_0+a_1p+...$, where $i_0$ is the first number in the presentation of $\pi$. In this case assuming $\pi_a\prec \pi$
 the equality (\ref{e2}) can be written as
   \begin{equation}\label{el}
\mu_1(a+(p^0))=\sum\limits_{b=0}^{b_0}\mu_1(a+b+(p^{1}))+\sum\limits_{b=b_0+1}^{p-1}\mu_2(a+b+(p^{1})).
\end{equation}
Since the equality (\ref{e2}) is true for $\mu_1$, the equality (\ref{el}) is satisfied iff
\begin{equation}\label{z}
\sum\limits_{b=b_0+1}^{p-1}\mu_2(a+b+(p^{1}))=\sum\limits_{b=b_0+1}^{p-1}\mu_1(a+b+(p^{1})).
\end{equation}
The last equality is true by condition (\ref{sh}).

Let now $\pi\prec \pi_a$ then
 the equality (\ref{e2}) can be written as
   \begin{equation}\label{er}
\mu_2(a+(p^0))=\sum\limits_{b=0}^{b_0}\mu_2(a+b+(p^{1}))+\sum\limits_{b=b_0+1}^{p-1}\mu_1(a+b+(p^{1})).
\end{equation}
Similarly as the case $\pi_a\prec \pi$ we obtain condition (\ref{z}).

{\it Case II: $\pi_a\cap \pi\ne\emptyset$}. In this case there exists $m=m(a,\pi)\geq 0$ such that
$a$ has representation $a=i_0+i_1p+\dots+i_mp^m+a_{m+1}p^{m+1}+\dots$ with $a_{m+1}\ne i_{m+1}$.

{\sl Subcase} $n\geq m+2$:
Since the path $a_{m+1}a_{m+2}\dots$ does not intersect with $\pi$,
the equation (\ref{e2}) for $n\geq m+2$ is
$$\mu_1(i_0+i_1p+\dots+i_mp^m+a_{m+1}p^{m+1}+\dots+(p^n))=
$$ $$\sum_{b=0}^{p-1}\mu_1(i_0+i_1p+\dots+i_mp^m+a_{m+1}p^{m+1}+\dots+bp^n+(p^{n+1})), \ \ \mbox{if} \ \ a_{m+1}<i_{m+1};$$
and
$$\mu_2(i_0+i_1p+\dots+i_mp^m+a_{m+1}p^{m+1}+\dots+(p^n))=$$
$$\sum_{b=0}^{p-1}\mu_2(i_0+i_1p+\dots+i_mp^m+a_{m+1}p^{m+1}+\dots+bp^n+(p^{n+1})), \ \ \mbox{if} \ \ a_{m+1}>i_{m+1}.$$
Since both $\mu_1$ and $\mu_2$ are distributions the last two equalities are true.

{\sl Subcase} $n\leq m+1$: In this case by our condition (\ref{sh}) we have $\mu_1=\mu_2$, consequently, the distribution relation (\ref{e2}) is satisfied.
\end{proof}

The following theorem says that one can construct a new distribution using a collection of given distributions.

\begin{thm}\label{t3} Let $k$ be a natural number.
Assume for each $t\in W_{k}$ a distribution $\nu_t$ on $\mathbb{Z}_p$ is given and
 there exists an interval $c+(p^{n_0})$, $n_0\geq k$ such that $\nu_t(c+(p^{n_0}))\ne \nu_s(c+(p^{n_0}))$ for at least one pair $t\ne s$, $t,s\in W_k$.
Then there exists a distribution $\mu^{(k)}=\mu^{(k)}[\nu_t, t\in W_k]$
on $\mathbb{Z}_p$ which is different from $\nu_t$ for each $t\in W_k$.
\end{thm}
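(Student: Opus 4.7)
The plan is to mimic the two-distribution patching of Theorem \ref{t2} but use the $k$-th level $W_k$ to glue all $p^k$ given distributions together. At level $k$ the half-tree $\Gamma^p_0$ has exactly $p^k$ vertices, and each $t\in W_k$ corresponds bijectively to a cylinder $U_t:=a_t+(p^k)\subset\mathbb{Z}_p$, where $a_t=i_0+i_1p+\dots+i_{k-1}p^{k-1}$ is the $p$-adic integer read off from the unique path $x^0<x_1<\dots<x_k=t$. The cylinders $\{U_t:t\in W_k\}$ partition $\mathbb{Z}_p$, so one can define $\mu^{(k)}$ by decreeing that it agrees with $\nu_t$ on $U_t$.

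Concretely, for an interval $a+(p^n)$ I set
\[
\mu^{(k)}(a+(p^n))=\begin{cases}\nu_t(a+(p^n))&\text{if }n\geq k\text{ and }a+(p^n)\subseteq U_t,\\[2pt] \displaystyle\sum_{t\in W_k,\;U_t\subseteq a+(p^n)}\nu_t(U_t)&\text{if }n<k.\end{cases}
\]
The distribution relation (\ref{e2}) is then verified in three regimes. For $n\geq k$ every sub-interval $a+bp^n+(p^{n+1})$ remains inside the same $U_t$, so (\ref{e2}) for $\mu^{(k)}$ reduces at once to (\ref{e2}) for $\nu_t$. For $n\leq k-2$ both sides are by definition sums of the quantities $\nu_t(U_t)$ over the appropriate $t\in W_k$, and additivity is automatic from the disjoint-partition structure. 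The boundary case $n=k-1$ is a brief bookkeeping check: the $p$ sub-intervals on the right are precisely the $p$ children $U_{t(b)}$ ($b=0,\dots,p-1$) of the parent level-$(k-1)$ cylinder, whose values match the sum definition on the left.

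The last step is to verify $\mu^{(k)}\neq\nu_u$ for every $u\in W_k$. Assume for contradiction $\mu^{(k)}=\nu_u$. Applying this identity to any sub-interval $V$ of $U_t$ at level $\geq k$ gives $\nu_u(V)=\mu^{(k)}(V)=\nu_t(V)$, so $\nu_u$ and $\nu_t$ must agree as distributions on each piece $U_t$. Choosing the pair $t^*\neq s^*$ and the interval $c+(p^{n_0})\subseteq U_r$ supplied by the hypothesis, this forces $\nu_r(c+(p^{n_0}))=\nu_u(c+(p^{n_0}))$, which together with $\nu_{t^*}(c+(p^{n_0}))\neq\nu_{s^*}(c+(p^{n_0}))$ will give the desired contradiction---possibly after slightly refining the patching by a cyclic permutation $\sigma$ of $W_k$ (i.e.\ by setting $\mu^{(k)}|_{U_t}=\nu_{\sigma(t)}|_{U_t}$) so that the contradiction is achieved for every choice of $u$.

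I expect the non-coincidence step to be the main obstacle: the hypothesis guarantees only a single pair of distinct values on a single interval, whereas the conclusion must separate $\mu^{(k)}$ from all $p^k$ given distributions simultaneously. The plan is to choose the patching permutation $\sigma$ so that no index $u$ can satisfy $\nu_u|_{U_t}=\nu_{\sigma(t)}|_{U_t}$ for all $t$ without collapsing the lone hypothesis inequality; by contrast, the distribution-relation verification is a routine case analysis once the construction is written down.
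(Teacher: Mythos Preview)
Your construction and the verification of the distribution relation essentially match the paper's: the paper defines $\mu^{(k)}(a+(p^n))=\nu_t(a+(p^n))$ for $n\geq k$ when $a\in U_t$, and for $n<k$ writes the values recursively via (\ref{e2}); your closed-form sum $\sum_{U_t\subseteq a+(p^n)}\nu_t(U_t)$ is the same thing unwound.

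You are right to identify the non-coincidence step as the genuine obstacle. The paper's argument there assumes $\mu^{(k)}=\nu_u$, derives $\nu_{t_a}(a+(p^n))=\nu_u(a+(p^n))$ for each $a$ (with $t_a\in W_k$ determined by $a$), and then asserts that since $t_a$ ranges over all of $W_k$ one obtains $\nu_t(a+(p^n))=\nu_u(a+(p^n))$ for \emph{all} $t\in W_k$ and all $a$. This is a quantifier slip: for a fixed $a$ only the single index $t_a$ is constrained, so one only learns $\nu_t|_{U_t}=\nu_u|_{U_t}$ for each $t$, exactly as you observe.

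Your proposed permutation refinement, however, does not close the gap. Take $p=2$, $k=1$, so $W_1=\{0,1\}$, $U_0=2\mathbb{Z}_2$, $U_1=1+2\mathbb{Z}_2$. Choose $\nu_0,\nu_1$ that agree on $U_1$ but differ at some interval $c+(2^{n_0})\subset U_0$; the hypothesis of the theorem is met. With the identity patching, $\mu^{(1)}|_{U_0}=\nu_0|_{U_0}$ and $\mu^{(1)}|_{U_1}=\nu_1|_{U_1}=\nu_0|_{U_1}$, so $\mu^{(1)}$ agrees with $\nu_0$ on every interval of level $\geq 1$, and hence (by additivity) also at level $0$; thus $\mu^{(1)}=\nu_0$. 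The only other permutation is the swap, and the symmetric computation gives $\mu^{(1)}=\nu_1$. So no choice of $\sigma$ on $W_1$ produces a distribution distinct from both $\nu_0$ and $\nu_1$; the phrase ``so that the contradiction is achieved for every choice of $u$'' cannot be realized here. The patching construction alone---whether in the paper's form or in your permuted form---does not establish the non-coincidence claim under the stated hypothesis.
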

\begin{proof}
 The distribution $\mu^{(k)}$ can be
constructed as follows (see Fig. \ref{f2}).
\begin{figure}
\includegraphics[width=12.5cm]{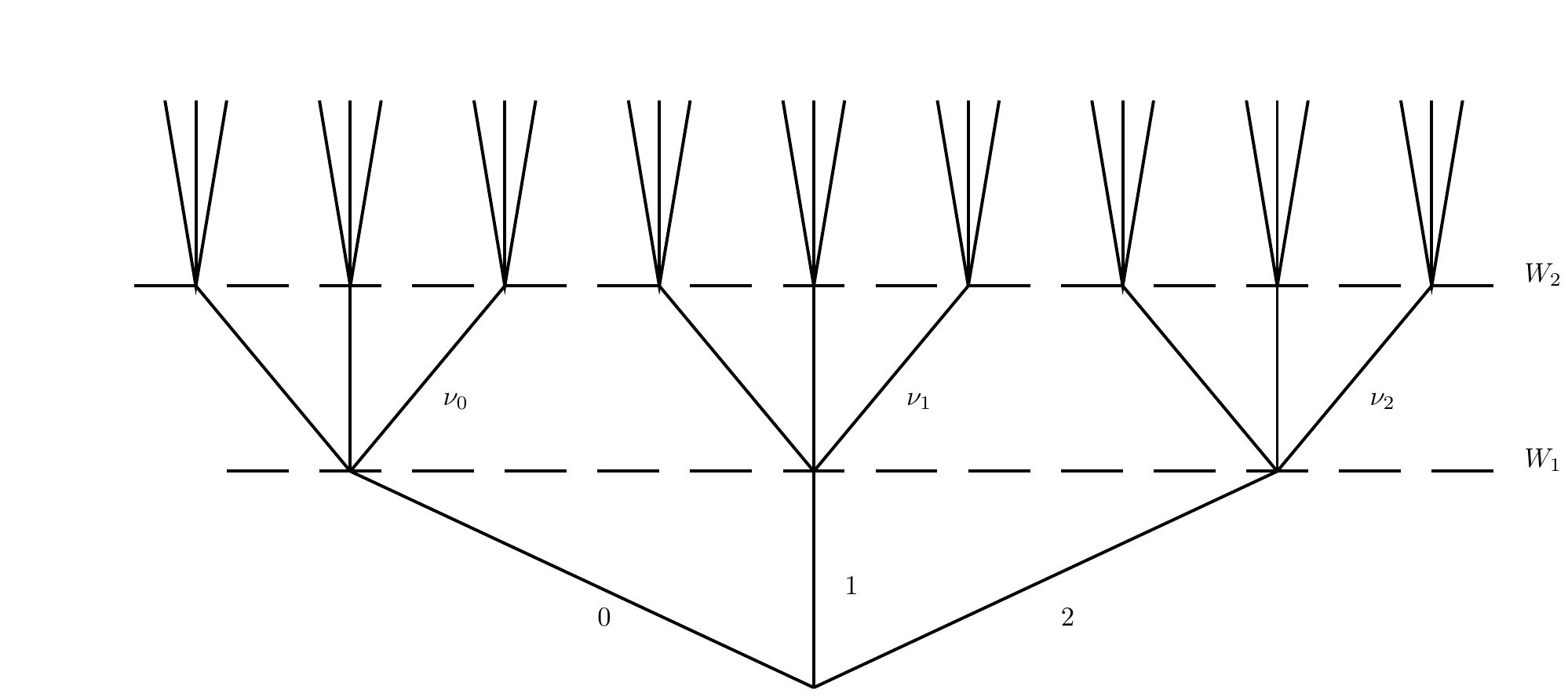}
\caption{\footnotesize \noindent
The half Cayley tree of order three. Case $k=1$, elements of $W_1$ are defined by $0,1,2$ and $\mu^{(1)}$ coincides with distribution $\nu_i$, on the $i$th branch, where  $i=0,1,2$.}\label{f2}
\end{figure}

{\it Case:}  $n\geq k$.  Recall $V_t$ is the set of vertices in the  semi-infinite tree $\Gamma^p_t$ growing from the vertex $t\in V^0$. In the case $n\geq k$ if $\pi_a\cap V_t\ne \emptyset$, for some $t\in W_k$ then $\pi_{a+bp^n}$ also remains in $V_t$ for any $b=0,1,\dots,p-1$.
Therefore we define
\begin{equation}\label{k0}
\mu^{(k)}(a+(p^n))=\nu_t(a+(p^n)), \ \ \mbox{if} \ \ \pi_a\cap V_t\ne\emptyset,
\end{equation}
then $$\mu^{(k)}(a+bp^n+(p^{n+1}))=\nu_t(a+bp^n+(p^{n+1})), \ \ \mbox{for each} \ \  b=0,1,\dots,p-1.$$ Since $\nu_t$ is a given distribution
the equation (\ref{e2}) is satisfied.

{\it Case:} $n\leq k-1$. Without loss of generality we
define $\mu^{(k)}(a+(p^{n}))$ only for $a$ which has the form
$$a=a_0+a_1p+\dots+a_{k-1}p^{k-1},$$
where $a_i\in \{0,1,\dots,p-1\}$.
Note that the sequence $a_0a_1\dots a_{k-1}$ presents a unique point of $W_k$.

For $n=k-1$ define
$$
\mu^{(k)}(a_0+a_1p+\dots+a_{k-1}p^{k-1}+(p^{k-1}))$$
\begin{equation}\label{k1}
=\sum_{b=0}^{p-1}\nu_{a_0a_1\dots a_{k-2}(a_{k-1}+b){\rm mod}\, p} (a_0+a_1p+\dots+[(a_{k-1}+b){\rm mod}\, p] p^{k-1}+(p^{k})).
\end{equation}
Now using formula (\ref{k1})
we define (for $n=k-2$)
$$
\mu^{(k)}(a_0+a_1p+\dots+a_{k-2}p^{k-2}+(p^{k-2}))$$
\begin{equation}\label{k2}
=\sum_{b=0}^{p-1}\mu^{(k)}(a_0+a_1p+\dots+[(a_{k-2}+b){\rm mod}\, p] p^{k-2}+(p^{k-1}))
\end{equation}
Iterating this procedure we recurrently define
$$
\mu^{(k)}(a_0+a_1p+\dots+a_{k-m}p^{k-m}+(p^{k-m}))$$
\begin{equation}\label{km}
=\sum_{b=0}^{p-1}\mu^{(k)}(a_0+a_1p+\dots+[(a_{k-m}+b){\rm mod}\, p] p^{k-m}+(p^{k-m+1})),
\end{equation}
where $m=2, 3, \dots, k$.

From the equalities (\ref{k0})-(\ref{km}) it follows that
$\mu^{(k)}(a+(p^n))$ satisfies (\ref{e2}) for any $a\in \mathbb{Z}_p$
and any $n\geq 0$. Therefore, by Theorem \ref{t2} it follows that $\mu^{(k)}$
can be uniquely extended to a
$p$-adic distribution on $\mathbb{Z}_p$.

Now we show that $\mu^{(k)}$ is different from $\nu_t$ for any $t\in W_k$.
Let us assume the converse,
i.e. there exists $u\in W_k$ such that $\mu^{(k)}=\nu_u$.
This means that for any $a\in \mathbb{Z}_p$ and any $n\in \mathbb{N}$
we have
\begin{equation}\label{uv}
\mu^{(k)}(a+(p^n))=\nu_u(a+(p^n)).
\end{equation}
For any $a\in \mathbb{Z}_p$ there exists $t_a\in W_k$ such that $\pi_a\cap V_{t_a}\ne\emptyset$.
Consequently, from (\ref{k0}) and (\ref{uv}) we get
\begin{equation}\label{uw}
\mu^{(k)}(a+(p^n))=\nu_{t_a}(a+(p^n))=\nu_u(a+(p^n)), \ \ n\geq k.
\end{equation}
When $a$ runs arbitrary $t_a$ takes all values on $W_k$. Therefore from (\ref{uw})
we get that $\nu_{t}(a+(p^n))=\nu_u(a+(p^n))$, $n\geq k$ for any $t\in W_k$.
But this is contradiction to the condition of theorem for $a=c$ and $n=n_0$.
Thus it follows that $\mu^{(k)}\ne \nu_t$ for each $t\in W_k$.
This completes the proof.
\end{proof}
\begin{rk} We note that the distribution $\mu^{(k)}$ mentioned in Theorem \ref{t3}
is useful to construct two distributions $\mu_1$ and $\mu_2$ which satisfy
the condition of Theorem \ref{t2}. Indeed, let $k=1$ and $\pi=i_0i_1\dots$,
consider $\nu_0, \nu_1$ such that $\nu_0(c+(p^{n_0}))\ne \nu_1(c+(p^{n_0}))$, for some $n_0\geq 1$ and $\pi_c\cap \pi=\emptyset$.
Define $\tilde\nu_t$, $t\in W_1$ by
$$\tilde\nu_t=\left\{\begin{array}{ll}
\nu_0, \ \ \mbox{if} \ \ t= i_0\\
\nu_1, \ \ \mbox{if} \ \ t\ne i_0.
\end{array}
\right.$$
Take $\mu_1=\nu_0$ and
$\mu_2=\mu^{(1)}[\tilde\nu_t, t\in W_1]$, it is easy to check that
$\mu_1\ne\mu_2$ and they satisfy the condition of Theorem \ref{t2}.
\end{rk}

In the $p$-adic integral theory one need to $p$-adic measure.
Because integral of
a continuous function calculated with respect to unbounded
$p$-adic distribution may not exist (see \cite{29} for details).

In the following theorem we give a conditions under which
$p$-adic distributions mentioned in Theorems \ref{t2} and \ref{t3}
become $p$-adic measures.

\begin{thm}\label{t4} The following statements hold
\begin{itemize}
\item[1)] Distribution $\mu$,  mentioned in Theorem \ref{t2}, is bounded iff $\mu_1$ and $\mu_2$ are bounded.

\item[2)] Distribution $\mu^{(k)}$, mentioned in Theorem \ref{t3}, is bounded iff $\nu_t(a+(p^n))$ is bounded for each $t\in W_k$ and $n\geq k$.
\end{itemize}
\end{thm}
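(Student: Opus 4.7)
The strategy for both parts is to compare $\mu$ (respectively $\mu^{(k)}$) interval-by-interval with the prescribed building blocks, and then invoke the non-Archimedean triangle inequality. The ``if'' directions are immediate: in part~1, formula (\ref{mu}) expresses $\mu(a+(p^n))$ on every interval as a single value of either $\mu_1$ or $\mu_2$, so $|\mu(A)|_p$ is controlled by the maximum of the two suprema; in part~2, (\ref{k0}) shows that for $n\geq k$ the value $\mu^{(k)}(a+(p^n))$ equals a single $\nu_t(a+(p^n))$, while iterating (\ref{k1})--(\ref{km}) for $n<k$ writes $\mu^{(k)}(a+(p^n))$ as a finite sum of values at level $k$, and the strong triangle inequality absorbs this sum into the same bound.

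The ``only if'' direction of part~2 is a direct inversion of (\ref{k0}): assuming $|\mu^{(k)}(A)|_p\leq M$ for all intervals $A$, pick any $t\in W_k$, any $n\geq k$, and any $a$ with $\pi_a\cap V_t\ne\emptyset$; then $\nu_t(a+(p^n))=\mu^{(k)}(a+(p^n))$, so $|\nu_t(a+(p^n))|_p\leq M$.

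The ``only if'' direction of part~1 is where the real work lies. Reading (\ref{mu}) case by case, every interval $a+(p^n)$ on which $\mu$ is declared equal to $\mu_1$ (resp.\ $\mu_2$) immediately yields $|\mu_1(a+(p^n))|_p\leq M$ (resp.\ $|\mu_2(a+(p^n))|_p\leq M$), and by hypothesis (\ref{sh}) every interval with $n\leq m(a,\pi)+1$ satisfies $\mu_1(a+(p^n))=\mu_2(a+(p^n))=\mu(a+(p^n))$, so is also bounded by $M$. The main obstacle is bounding $\mu_1$ on the deep intervals where $\mu$ coincides with $\mu_2$ (and symmetrically for $\mu_2$). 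My plan is to use the distribution relation (\ref{e2}) to climb up to an ancestor interval handled by one of the clean cases just listed---where $\mu_1$ is bounded by $M$---and then descend one level at a time, isolating the target sub-interval as a difference of the parent value and the $\mu_1$-values on the other $p-1$ siblings; the strong triangle inequality then transfers the bound to the target, provided each of those siblings has been treated inductively. I would carry out this induction on $n-m(a,\pi)$ and verify that at each step the piecewise definition of $\mu$, together with (\ref{sh}), supplies enough a priori bounds on the sibling intervals for the descent to close.
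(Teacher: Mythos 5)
Your ``if'' directions and your treatment of part 2 are essentially the paper's own argument: the paper reads the values of $\mu$ and $\mu^{(k)}$ directly off (\ref{mu}) and (\ref{k0})--(\ref{km}), using that a finite sum of bounded terms is bounded, and inverts (\ref{k0}) for the converse in part 2. Note that your explicit restriction there to those $a$ with $\pi_a\cap V_t\ne\emptyset$ is in fact the only reading under which that converse is provable, since (\ref{k0}) says nothing about the values of $\nu_t$ on intervals that miss $V_t$.

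The genuine gap is in your descent argument for the ``only if'' of part 1, and it cannot be closed. Suppose you want to bound $\mu_1$ on an interval $a+(p^n)$ lying strictly inside a subtree hanging to the right of $\pi$ (digits $i_0,\dots,i_m,a_{m+1},\dots$ with $a_{m+1}>i_{m+1}$ and $n\ge m+2$). When you write this interval as ``parent minus siblings'' via (\ref{e2}), the siblings lie in the same right-hanging subtree, where $\mu$ coincides with $\mu_2$ and where (\ref{sh}) is silent (it only covers $n\le m(a,\pi)+1$); already at the first step the other right-branching siblings carry no a priori $\mu_1$-bound, so your induction on $n-m(a,\pi)$ needs at each stage exactly the bounds it is trying to establish---the distribution relation constrains only the \emph{sum} of the $p$ sibling values, never the individual terms. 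Worse, the claim itself is false as stated, so no argument can succeed. Take $\nu$ a bounded distribution (e.g.\ a point mass $\nu(A)=1$ if $x_0\in A$, $0$ otherwise). Define $\delta$ to vanish on every interval except those strictly inside one fixed right-hanging subtree, where at each level you split the value of an interval between two of its children as $K$ and $(\mbox{value})-K$, letting $|K|_p$ grow along the levels; then $\delta$ satisfies (\ref{e2}), so $\mu_1:=\nu+\delta$ and $\mu_2:=\nu$ are distributions, they satisfy (\ref{sh}) because $\delta$ vanishes on all intervals along $\pi$, and the measure $\mu$ produced by (\ref{mu}) equals $\nu$, hence is bounded---yet $\mu_1$ is unbounded. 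Boundedness of $\mu$ only forces boundedness of $\mu_1$ and $\mu_2$ on those intervals where (\ref{mu}) actually invokes them. For comparison, the paper's entire proof of part 1 is the sentence ``This is consequence of the equality (\ref{mu})'', which honestly yields only the ``if'' direction; your instinct that the real work lies in the converse was correct, but that work cannot be completed, and the converse should either be dropped or restated for the restricted families of intervals.
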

\begin{proof}
1) This is consequence of the equality (\ref{mu}).

2) Case $n\geq k$: In this case from (\ref{k0}) we get that $\mu^{(k)}$ is bounded iff  $\nu_t$ is bounded for each $t\in W_k$.

Case $n\leq k-1$: Since the right hand side of (\ref{k1}) is sum of bounded terms (corresponding to the case $n=k$) the left hand side also should be bounded. Consequently, iterating of the recursions (\ref{k1})-(\ref{km}) one can see that $\mu^{(k)}$ is bounded iff $\nu_t(a+(p^n))$ is bounded for each $t\in W_k$ and $n\geq k$.
\end{proof}

\section*{ Acknowledgements}

 This work was partially supported by Kazakhstan Ministry of Education and Science, grant 0828/GF4.


\begin{thebibliography}{99}





\bibitem{y1} Baxter R.J. {\it Exactly Solved Models in Statistical Mechanics}
(Academic, London, 1982).

\bibitem{8} Beltrametti E.G., Cassinelli G., Quantum Mechanics and $p$-adic Numbers,{\it Found. Phys.}
{\bf 2}, (1972), 1-7.

\bibitem{y2}  Bleher P.M., Ganikhodjaev N.N. On pure phases of the Ising model
on the Bethe lattice. {\it Theor. Probab. Appl.} {\bf 35} (1990), 216-227.

\bibitem{GRR} Gandolfo G., Rozikov U.A., Ruiz J. On $p$-adic Gibbs measures for hard core model on a Cayley tree. {\it Markov Processes Related Fields}. {\bf 18}(4) (2012), 701-720.

\bibitem{16} Ganikhodjaev N.N.,  Mukhamedov F.M., Rozikov U.A., Phase Transitions in the Ising Model on $Z$ over
the $p$-adic Number Field, {\it Uzb. Mat. Zh.}, No. 4, (1998), 23-29.

\bibitem{GG} Gras Georges, Mesures $p$-adiques. (French) {\it Th\'eorie des nombres, Ann\'ee} 1991/1992, 107 pp., Publ. Math.Fac. Sci. Besan\c{c}on, Univ. Franche-Comt\'e, Besan\c{c}on.

\bibitem{17}  Georgii H.-O., Gibbs Measures and Phase Transitions (W. de Gruyter, Berlin, 1988).

\bibitem{19} Khamraev M., Mukhamedov F.M., Rozikov U.A. On the uniqueness of Gibbs
measures for $p-$adic non homogeneous $\lambda-$ model on the Cayley tree.
{\it Letters in Math. Phys.} {\bf 70} (2004), 17-28.


\bibitem{22}  Khrennikov A. Yu., $p$-Adic Valued Probability Measures, {\it Indag. Math.}, New Ser. {\bf 7} (1996), 311-330.

\bibitem{23}  Khrennikov A. Yu., $p$-Adic Valued Distributions in Mathematical Physics (Kluwer, Dordrecht, 1994).

\bibitem{26} Khrennikov A. Yu., Mukhamedov F.M.,  Mendes J.F.F., On $p$-adic Gibbs Measures of the Countable
State Potts Model on the Cayley Tree, {\it Nonlinearity} {\bf 20}, (2007), 2923-2937.

\bibitem{28} Khrennikov A. Yu., Yamada S.,  van Rooij A., The Measure-Theoretical Approach to $p$-adic Probability
Theory, {\it Ann. Math. Blaise Pascal} {\bf 6}, (1999), 21-32.

\bibitem{29} Koblitz N., $p$-Adic Numbers, $p$-adic Analysis, and Zeta-Functions (Springer, Berlin, 1977).

\bibitem{Fg} Mukhamedov F.M. On the Existence of Generalized Gibbs Measures
for the One-Dimensional $p$-adic Countable State Potts Model. {\it
Proc. Steklov Inst. Math.}, {\bf 265} (2009), 165-176.

\bibitem{36} Mukhamedov F.M., Rozikov U.A., On Gibbs Measures of $p$-adic Potts Model on the Cayley Tree, {\it Indag.
Math.}, New Ser. {\bf 15} (2004), 85-100.

\bibitem{37} Mukhamedov F.M., Rozikov U.A., On Inhomogeneous $p$-adic Potts Model on a Cayley Tree, {\it Infin. Dimens.
Anal. Quantum Probab. Relat. Top.} {\bf 8}, (2005), 277-290.

\bibitem{38} Mukhamedov F.M., Rozikov U.A., Mendes J.F.F., On Phase Transitions for $p$-adic Potts Model with
Competing Interactions on a Cayley Tree, {\it in $p$-Adic Mathematical Physics}: Proc. 2nd Int. Conf., Belgrade,
2005 (Am. Inst. Phys., Melville, NY, 2006), AIP Conf. Proc. 826, pp. 140-150.

\bibitem{MSK} Mukhamedov F.M., Saburov M., Khakimov O.N., On $p$-adic Ising–Vannimenus model on
an arbitrary order Cayley tree. {\it J. Stat. Mech.}, 2015, P05032.

\bibitem{M} Mukhamedov F.M., On dynamical systems and phase transitions for $Q+1$-state $p$-adic Potts model on the
Cayley tree. {\it Math. Phys. Anal. Geom.} {\bf 53} (2013), 49--87.

\bibitem{40}  van Rooij A. C. M., Non-Archimedean Functional Analysis (M. Dekker, New York, 1978).

\bibitem{R} Rozikov U.A. Gibbs measures on Cayley trees. {\sl World Sci. Publ}. Singapore. 2013, 404 pp.

\bibitem{41} Schikhof W.H., Ultrametric Calculus (Cambridge Univ. Press, Cambridge, 1984).

\bibitem{Shiryaev} Shiryaev, A.N. \emph{Probability}, 2nd ed. Graduate Texts in
Mathematics, vol.~95. Springer, New York, 1996.

\bibitem{T} Tugyonov Z. T. On periodic p-adic distributions. {\it p-Adic Numbers Ultrametric Anal. Appl.} {\bf 5}(3) (2013),  218--225.

\bibitem{T1} Tugyonov Z.T. Periodicity of Bernulli's distribution.
{\it Uzbek Math. Jour.} 2 (2013), 107--111.

\bibitem{T2} Tugyonov Z.T. Non Uniqueness of $p$-adic Gibbs distribution for the Ising model on the lattice $Z^d$.
{\it Jour. Siber. Federal Univ.} {\bf 9}(1) (2016), 123-–127.

\bibitem{48} Vladimirov V.S.,  Volovich I. V., Zelenov E. V., $p$-Adic Analysis and Mathematical Physics (Nauka, Moscow,
1994; World Sci., Singapore, 1994).

\end{thebibliography}
\end{document}